\title{Southwest Tree: A Low-Memory Data Structure for Partial Accumulations by Non-Commutative Invertible Operations} %TODO Please add
\titlerunning{Southwest Tree} %TODO optional, please use if title is longer than one line
\author{Nicholas J.C. Papadopoulos\footnote{Corresponding author; Nicholas.Papadopoulos@colorado.edu}}{Department of Computer Science, University of Colorado Boulder, Boulder, CO
80309, USA \and \url{https://www.nicholas-papadopoulos.edu/} }{Nicholas.Papadopoulos@colorado.edu}{https://orcid.org/0000-0002-6357-0030}{}
\authorrunning{N. Papadopoulos} %TODO mandatory. First: Use abbreviated first/middle names. Second (only in severe cases): Use first author plus 'et al.'
\keywords{Data Structure, Binary Tree, Partial Aggregation} %TODO mandatory; please add comma-separated list of keywords
\begin{document}

\maketitle

%TODO mandatory: add short abstract of the document
\begin{abstract}
The task of accumulating a portion of a list of values, whose values may be updated at any time, is widely used throughout various applications in computer science.
While it is trivial to accomplish this task without any constraints, trivial solutions often sacrifice time complexity in either accumulating or updating the values, one being constant time and the other being linear.
To even out the complexity, two well-known data structures have been used to accomplish this task, namely the Segment Tree and the Binary Indexed Tree, which are able to carry out both tasks in $O(\log_2 N)$ time for a list of $N$ elements.
However, the Segment Tree suffers from requiring auxiliary memory to contain additional values, while the Binary Indexed Tree is unable to handle non-commutative accumulation operations.
Here, we present a data structure, called the Southwest Tree, that accomplishes these tasks for non-commutative, invertible accumulation operations in $O(\log_2 N)$ time and uses no additional memory to store the structure apart from the initial input array.
\end{abstract}

\section{Introduction}
The task of accumulating values frequently arises in everyday programming tasks. Depending on the case, one may find themselves wanting to be able to update one of the values at any time while being able to accumulate either all or a portion of them. A common example of this is called the ``Partial Sums'' problem, where the update to the elements is addition and the accumulation is the sum of the elements~\cite{Fredman1982TheCO}. 

\begin{definition}[Partial Sums]
Maintain an array of $N$ elements $I[1], I[2], \dots, I[N]$ that supports the following operations:
\begin{itemize}
    \item $\texttt{update}(i, \delta)$: set $I[i] = I[i] + \delta$
    \item $\texttt{accumulate}(i)$: return $\sum_{j=1}^{i} I[j]$
\end{itemize}
\end{definition}

Naive solutions of either keeping the initial array unmodified or storing the partial accumulation up to each index in the array can perform only one of these operations in $O(1)$ time but the other in $O(N)$ time. 
This topic has spawned data structures that better optimize these tasks so that they may both be accomplished in $O(\log_2 N)$ time.

The current, commonly utilized data structures for this task are the Segment Tree and the Binary Indexed Tree~\cite{deBerg2008,Fenwick1994AND}.
These both successfully perform the \texttt{update} and \texttt{accumulate} operations in $O(\log_2 N)$ time but differ in required memory.
On one hand, the Segment Tree requires up to $2N - 1$ array elements, while the Binary Indexed Tree requires only $N$ elements.

Considering different possible accumulation operations reveals another difference between the two solutions.
While Segment Trees are able to handle non-commutative operations, Binary Indexed Trees are not.
Since the $\texttt{sum}$ operation is commutative, both solutions can be used.
However, in the case of non-commutative operations, such as matrix multiplication, Binary Indexed Trees cannot be used without using additional memory.

This paper proposes a tree, called the Southwest Tree, that is stored using $O(N)$ memory, applies the \texttt{update} and \texttt{accumulate} operations in $O(\log_2 N)$ time, and can handle non-commutative, invertible accumulation operations. 
This, then, allows the same memory benefit as the Binary Indexed Tree over the Segment Tree without sacrificing run time and also allowing a wider set of accumulation operations.

This paper first outlines the general structure, layout, and properties of the Southwest Tree by defining variables, defining parent/child relationships, defining the values of each node, and providing visualizations of an example.
It then describes how to construct the tree and perform the \texttt{update} and \texttt{accumulate} operations while providing Python code samples to do so.
Finally, the time and space complexity is analyzed and compared to that of the Segment Tree.

\section{Related Work}

One data structure in use for this problem is the Segment Tree, which works by setting the initial values as leaf nodes of the tree and propagating upwards, combining two adjacent nodes into their parent~\cite{deBerg2008}. 
This adds nodes to store combined values in addition to the initial values themselves.
This does not affect time complexity at large scales, but it does require up to nearly twice as much memory as other options, which could be more of a detriment for certain memory-limited applications or operations. 
This data structure can be constructed in $O(N)$ time, perform the update and accumulate operations in $O(\log_2 N)$ time, and requires up to $2N - 1$ nodes.

The Binary Indexed Tree, or Fenwick Tree, is the other leading option for this problem~\cite{Fenwick1994AND}. 
It works by finding children and parents by simply incrementing or decrementing the last set bit in a node index's binary representation, while storing the accumulation from (including) the parent index to (excluding) it's own index of the initial array values. 
This saves on memory space compared to the Segment Tree. 
Hence, this data structure can be constructed in $O(N)$ time, perform the update and accumulate operations in $O(\log_2 N)$ time, and requires $N$ nodes.

Red-Black Trees have been used as an alternative to the Binary Indexed Tree~\cite{cormen01introduction,accumulation_tree}.
However, this solution, as with the Binary Indexed Tree, also cannot handle non-commutative accumulation operations. Furthermore, it requires additional time complexity for construction, namely $O(N \log_2 N)$.

Dietz, later modified by Ramen, Ramen, and Rao, proposes a structure that runs the \texttt{update} and \texttt{accumulate} operations in $\Theta(\log_2 N / \log_2 \log_2 N)$ amortized time~\cite{Dietz1989OptimalAF,ramen-ramen-rao}. 
It works by storing ``the list at the leaves of a nearly complete tree of branching factor b = $\Theta(\log ^\epsilon n)$,'' where $\epsilon$ is a positive constant less than one. 
It also uses a two-array scheme and precomputed tables for internal nodes to track children in constant amortized time. 
However, this scheme requires auxiliary memory greater than a single array of $N$ elements, and each internal node has worst-case time of $O(\log_2 N)$ to \texttt{update}.

Hon, Sadakane, and Sung review solutions to the similar \textit{Searchable Partial Sum} problem, which includes the capability of the \textit{search} opertaion, where $search(j)$ returns the smallest $i$ such that $sum(i) \geq j$~\cite{10.1007/978-3-540-24587-2_52}. 
These solutions, while interesting, do not provide improvements on the operations of the \textit{Partial Sum} problem.

\section{Structure and Layout}

The following variables will be defined here and referenced throughout the paper. 
$I$ is the initial array of $N$ elements of which we want to keep track, which will be modified in-place to represent the Southwest Tree that supports the \texttt{update} and $\texttt{accumulate}$ operations. 
We will also use $A$ as an variable name representing the in-place modified values of $I$ according to the construction of the Southwest Tree.
These two variables represent the same memory address but are useful distinctions in proofs and explanations.
Array indexing begins at 1 in visuals and text descriptions for ease of understanding, but it begins at 0 in code examples for slightly improved memory optimization. 
We define the $+$ operation on the value types stored in $I$ as a suitable accumulation operation, including non-commutative and invertible, such as matrix multiplication on invertible matrices.
We still refer to this operation as a sum throughout the paper, as it makes the use of traditional mathematical symbols, such as $\sum$, available.

$\texttt{update}(i, d)$ adds $d$ to $I[i]$, and $\texttt{accumulate}(i)$ returns the sum of the first $i$ elements in $I$. 
$i$ will be assumed to be a valid index in the tree, i.e., no less than 1 and no greater than $N$ in one-based indexing.
Phantom nodes are nodes that do not exist in memory but whose indices are used for tree traversal. 
These are described in more detail in Sec.~\ref{subsec:building}.

This paper refers to an ongoing example where $I$ initially consists of the odd numbers from 1 to 17, inclusive, as shown in Fig.~\ref{fig:initial}.
These elements will be theoretically arranged in the Southwest Tree as shown in Fig.~\ref{fig:sw}(a), but are actually stored by modifying $I$ to reflect the tree, as shown in Fig.~\ref{fig:sw}(b).
Note that the code examples throughout this paper build upon each other, so assume that each code example has access to the code listed in previous examples.

\begin{figure}
    \centering
    \includegraphics[width=.7\textwidth]{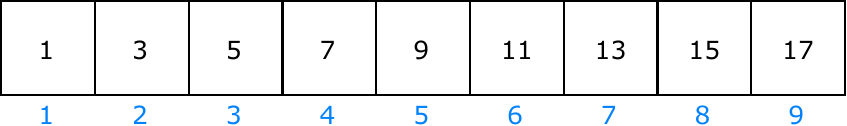}
    \caption{The array of initial elements. Element values are in black and element indices are in blue.}
    \label{fig:initial}
\end{figure}

\begin{figure}
    \centering
    \includegraphics[width=\textwidth]{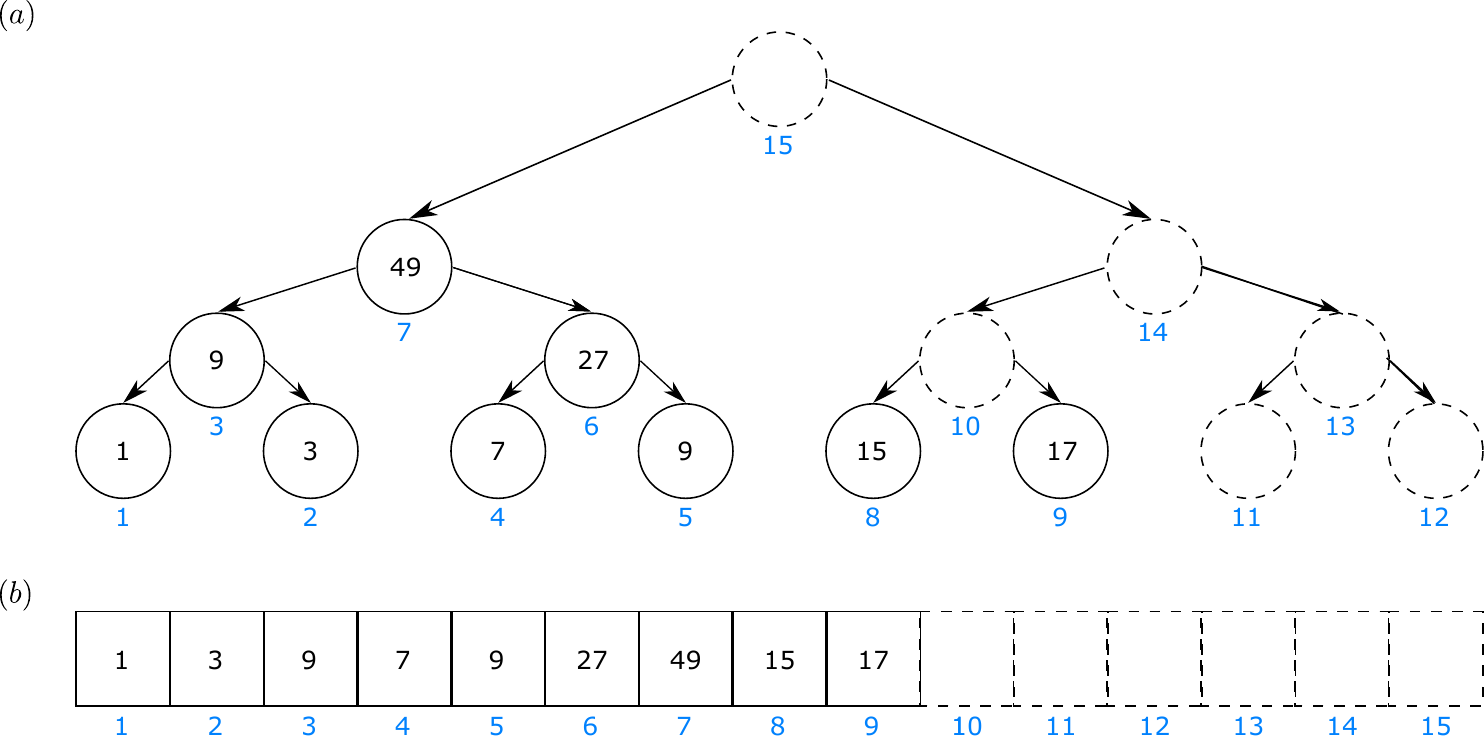}
    \caption{The Southwest Tree tracking initial elements $[1, 3, 5, 7 ,9, 11, 13, 15, 17]$ in the form of (a) a tree and (b) an array. Element values are in black and element indices are in blue. Dashed borders indicate phantom elements, which do not exist in memory.}
    \label{fig:sw}
\end{figure}

\subsection{Node Children}

Say some node $n$ has index $i$.
The overall structure of this tree is that the index of any node to the left of or below $n$ will be less than $i$.
This is the origin of the name ``Southwest Tree''.
To achieve this, the structure sets the right child with index $i - 1$ and the left child with index $i - 2^{h_i - 1}$, where $h_i$ is the height of $n$.
\begin{align}
    \texttt{left\_child}(i) &= i - 2^{h_i - 1} \label{eq:left-child} \\
    \texttt{right\_child}(i) &= i - 1 \label{eq:right-child}
\end{align}

For example, the node at index 7 in Fig.~\ref{fig:sw}(a) has left and right children at the following indices:
\begin{equation}
    \begin{aligned}
        \texttt{left\_child}(7) &= 7 - 2^{3 - 1} = 3 \\
        \texttt{right\_child}(7) &= 7 - 1 = 6.
    \end{aligned}
\end{equation}
Note that parent/child relationships remain the same regardless of tree size.
That is, no matter how many nodes are in the tree, the node at index 7, for example, will always have children at 3 and 6.

\subsection{Node Values}

The value of a node (except phantom nodes) at index $i$ is the value of its two children plus the value of $I[i]$,
\begin{equation}\label{eq:node-values}
    A[i] = A[i - 2^{h_i - 1}] + A[i - 1] + I[i].
\end{equation}
Since the tree is built bottom-up (see Sec.~\ref{subsec:building}), the child elements of $A[i]$ will already have the modified values of the tree.

For example, one can see the node at index 7 in Fig.~\ref{fig:sw}(a) is the sum of the values in nodes 3 and 6 plus the value of the initial element at index 7,
\begin{equation}
    \begin{aligned}
        A[7] &= A[3] + A[6] + I[7] \\
        &= 9 + 27 + 13 \\
        &= 49.
    \end{aligned}
\end{equation}

\subsection{Building the Tree} \label{subsec:building}

The algorithm begins at the root index, which will be $2^{\lfloor \log_2 N \rfloor + 1} - 1$.
It then executes a depth-first traversal until it reaches a leaf node index, which will be when the node height is 1.
The node value of the leaf nodes will be the initial element at the node's index. 
The algorithm then traverses back up the tree, where each node value will be the sum of its children plus the initial element at the node's index.

Although the algorithm will traverse through indices up to the root index, it will only store values in the original array and therefore does not store values at extra indices. 
We call the indices that are traversed but greater than $N$ phantom indices, and we call nodes at these indices phantom nodes. 
That is, they serve to help traversal, but only in mathematical operations.

Listing~\ref{lst:build} shows a Python example of the build operation. 
Note that array indexing is zero-based here and that the only modification required to go from zero-based to one-based is to instead set $root\_index = (root\_left << 1) - 1$.

\noindent
\begin{minipage}{\linewidth}
\begin{lstlisting}[language=Python, label=lst:build, caption=A Python example of the build operation. The input array A is assumed to already be instantiated and populated with the values of I here. `root\_left' is the difference between the root index and its left child index.]
num_elements = len(A)
root_left = 1 << num_elements.bit_length() - 1
root_index = (root_left << 1) - 2

def build(i, left):
    if left == 1:
        return
    for child_diff in (left, 1):
        build(i - child_diff, left >> 1)
    combine_children(i, left)

def combine_children(i, left):
    if i < num_elements:
        A[i] = A[i - left] + A[i - 1] + A[i]

build(root_index, root_left)
\end{lstlisting}
\end{minipage}

One can see that nodes with an index greater than the size of the initial array are not stored at all. 
This is shown in Fig.~\ref{fig:sw}, where the phantom nodes and array elements, i.e., all those with index greater than 9 in this case, have no value.

\begin{lemma}[Values of the array]\label{lem:arr-val} 
Given an initial array of values, $I$, modified to an array $A$ according to Sec.~\ref{subsec:building}, the value at index $i$ will be $A[i] = \sum_{j = i - 2^{h_i} + 2}^{i} I[j]$ while maintaining the order of operands, where $h_i$ is the height of the node at index $i$.
\end{lemma}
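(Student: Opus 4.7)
The plan is to proceed by strong induction on the node height $h_i$, using the recursive definition in Eq.~(\ref{eq:node-values}) as the engine. I restrict attention to indices $1 \le i \le N$, for which $A[i]$ is actually written by \texttt{combine\_children}; phantom indices $i > N$ are traversed but never stored, so the claim is vacuous there.

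For the base case $h_i = 1$, \texttt{build} returns before calling \texttt{combine\_children}, so $A[i]$ retains the initial value $I[i]$. The claimed range $\sum_{j = i - 2 + 2}^{i} I[j]$ collapses to the single term $I[i]$, matching.

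For the inductive step at height $h_i$, I would apply the inductive hypothesis to the two children, whose heights are $h_i - 1$ by Eqs.~(\ref{eq:left-child}) and~(\ref{eq:right-child}). A short calculation shows that the left child at index $i - 2^{h_i - 1}$ contributes $\sum_{j = i - 2^{h_i} + 2}^{i - 2^{h_i - 1}} I[j]$, while the right child at $i - 1$ contributes $\sum_{j = i - 2^{h_i - 1} + 1}^{i - 1} I[j]$. These two index ranges are adjacent and disjoint, so they jointly cover $[i - 2^{h_i} + 2, i - 1]$; appending $I[i]$ via Eq.~(\ref{eq:node-values}) then extends the range to the desired $[i - 2^{h_i} + 2, i]$.

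The main point requiring care, rather than any deep obstacle, is that $+$ is non-commutative: I must argue that the operands appear in the correct left-to-right order, not merely that they form the same multiset. This is handled by observing that the left child's index range sits strictly below the right child's range in $I$, and Eq.~(\ref{eq:node-values}) accumulates them in the order (left) $+$ (right) $+\, I[i]$, which matches the natural increasing order of $j$ in the claimed summation. A secondary bookkeeping check, that $i - 2^{h_i} + 2 \ge 1$ whenever $A[i]$ is actually stored, follows from the choice of root index $2^{\lfloor \log_2 N \rfloor + 1} - 1$, which is exactly large enough so that the leftmost descendant of every stored node lands at index at least $1$.
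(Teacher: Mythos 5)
Your proposal is correct and follows essentially the same route as the paper: the paper's proof is an induction on node height driven by Eq.~(\ref{eq:node-values}), working out heights $1$, $2$, $3$ and then the general step in which the left child contributes $\sum_{j=i-2^{h_i}+2}^{i-2^{h_i-1}} I[j]$ and the right child $\sum_{j=i-2^{h_i-1}+1}^{i-1} I[j]$, exactly as in your inductive step. Your write-up merely phrases the ``pattern'' step as an explicit strong induction and adds the (correct) side remarks about phantom indices and the lower bound $i-2^{h_i}+2\ge 1$, which the paper leaves implicit.
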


\begin{proof}
Let $N(i, h_i)$ be the value stored at index i with its node having height $h_i$, and let $N(i, 1) = I[i]$ since leaf nodes hold only the initial value.
We begin by proving that all subtrees with a node at height 2 is $\sum_{j = i - 2}^i I[j]$, where $i$ is the index of the root node in the subtree.
\begin{equation}
    \begin{aligned}
        N(i, 2) &= N(\texttt{left\_child}(i), 1) + N(\texttt{right\_child}(i), 1) + I[i] \\
        &= N(i - 2^{h_{i} - 1}, 1) + N(i - 1, 1) + I[i] \\
        &= I[i - 2^{2 - 1}] + I[i - 1] + I[i] \\
        &= I[i - 2] + I[i - 1] + I[i] \\
        &= \sum_{j = i - 2}^i I[j].
    \end{aligned}
\end{equation}
Note here that the order of operands is maintained.

Any node with height 3 is then
\begin{equation}
    \begin{aligned}
        N(i, 3) &= N(\texttt{left\_child}(i), 2) + N(\texttt{right\_child}(i), 2) + I[i] \\
        &= N(i - 2^{h_{i} - 1}, 2) + N(i - 1, 2) + I[i] \\
        &= N(i - 2^{3 - 1}, 2) + N(i - 1, 2) + I[i] \\
        &= N(i - 4, 2) + N(i - 1, 2) + I[i] \\
        &= \left(\sum_{j = i - 4 - 2}^{i - 4} I[j]\right) + \left(\sum_{j = i - 1 - 2}^{i - 1} I[j]\right) + I[i] \\
        &= \left(\sum_{j = i - 6}^{i - 4} I[j]\right) + \left(\sum_{j = i - 3}^{i - 1} I[j]\right) + I[i] \\
        &= \sum_{j = i - 6}^{i} I[j], \\
    \end{aligned}
\end{equation}
noting again that order is maintained.

We begin to see the pattern, then, stemming from the fact that all nodes contain the accumulation of values from its left and right subtrees, along with its own index.
Both subtrees accumulate values from all indices beginning with the leftmost leaf up to its own index.
In general, for any node at index $i$ with height $h_i$,
\begin{equation}
    \begin{aligned}
        N(i, h_i) &= N(\texttt{left\_child}(i), h_i-1) + N(\texttt{right\_child}(i), h_i-1) + I[i] \\
        &= N(i - 2^{h_{i} - 1}, h_i-1) + N(i - 1, h_i-1) + I[i] \\
        &= \left(\sum_{j = i - \sum_{k = 1}^{h_i - 1} 2^k}^{i - 2^{h_i - 1}} I[j]\right) + \left(\sum_{j = i - 1 - \sum_{k = 1}^{h_i - 2} 2^k}^{i - 1} I[j]\right) + I[i] \\
        &= \left(\sum_{j = i - (2^{h_i} - 2)}^{i - 2^{h_i - 1}} I[j]\right) + \left(\sum_{j = i - 1 - (2^{h_i - 1} - 2)}^{i - 1} I[j]\right) + I[i] \\
        &= \left(\sum_{j = i - 2^{h_i} + 2}^{i - 2^{h_i - 1}} I[j]\right) + \left(\sum_{j = i - 2^{h_i - 1} + 1}^{i - 1} I[j]\right) + I[i] \\
        &= \sum_{j = i - 2^{h_i} + 2}^{i} I[j].
    \end{aligned}
\end{equation}
Note here that the leftmost leaf of the right subtree has one greater index than the root of the left subtree.

\end{proof}

\section{Operations}

The following descriptions use the variable $curr$ to track the index of the current node in the recursive traversal.
If the desired index $i$ is less than or equal to the index of the left child, then traverse into the left subtree. 
Otherwise, traverse into the right subtree. 
Recurse until you reach the node where $curr == i$.
The helper function in Listing~\ref{lst:getchild} finds the root index of the subtree into which the recursion should traverse.

\noindent
\begin{minipage}{\linewidth}
\begin{lstlisting}[language=Python, label=lst:getchild, caption=A Python example of a helper function that returns the root index of the left or right subtree into which the recursion should traverse. $i$ is the desired index and $left$ is the difference in index to the left child.]
def get_child(i, curr, left):
    diff = left if i <= curr - left else 1
    return curr - diff
\end{lstlisting}
\end{minipage}

\subsection{Accumulate}\label{sec:accumulate}

The $\texttt{accumulate}(i)$ operation, shown in Listing~\ref{lst:accumulate}, is performed by traversing the tree beginning at the root node. 
If $i$ is equal to $curr$, accumulate the value of $A[curr]$ to the total and break out of the traversal. 
If $i$ is less than or equal to the index of the left child, then traverse into the left subtree. 
Otherwise, accumulate the value of the left child to the total and traverse into the right subtree. 
There is no need to check for phantom nodes here, because if the left child is phantom, then $i$ must be less than its index, hence traversing into the tree with no attempt to accumulate a phantom value.

\noindent
\begin{minipage}{\linewidth}
\begin{lstlisting}[language=Python, label=lst:accumulate, caption=A Python example of the accumulate operation.]
def accumulate(i):
    return acc_rec(i, root_index, root_left)

def acc_rec(i, curr, left):
    if curr == i:
        return A[curr]
    val = acc_rec(i, get_child(i, curr, left), left >> 1)
    return A[curr-left] + val if i > curr-left else val
\end{lstlisting}
\end{minipage}

\begin{theorem}[Accumulation]\label{thrm:accumulation}
Given an initial array of values, $I$, modified to an array $A$ according to Sec.~\ref{subsec:building}, the function $\texttt{accumulate}(i)$ described in Sec.~\ref{sec:accumulate} correctly returns $\sum_{j=1}^{i} I[j]$ while maintaining the order of operands.
\end{theorem}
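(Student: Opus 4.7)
The plan is to prove, by induction on the height $h$ of the subtree visited by the recursive call, a strengthened invariant: for every root index $c$ of a subtree of height $h$ and every target $i$ with $c - 2^{h} + 2 \le i \le c$, the call $\texttt{acc\_rec}(i, c, 2^{h-1})$ returns $\sum_{j = c - 2^{h} + 2}^{i} I[j]$ with the operands in left-to-right order. Applying this invariant at the root, where $c = 2^{\lfloor \log_2 N \rfloor + 1} - 1$ and $h = \lfloor \log_2 N \rfloor + 1$, the leftmost index collapses to $1$ and the return value becomes $\sum_{j=1}^{i} I[j]$, which is the theorem's claim.

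First I would dispatch the base case $h = 1$: the only admissible $i$ equals $c$, the code returns $A[c] = I[c]$ by the leaf convention of the build, and the summation reduces to this single term. The inductive step splits on the three code branches. If $i = c$, Lemma~\ref{lem:arr-val} immediately gives $A[c] = \sum_{j = c - 2^{h} + 2}^{c} I[j]$. If $i \le c - 2^{h-1}$, the recursion enters the left subtree rooted at $c - 2^{h-1}$ of height $h-1$, whose leftmost leaf $(c - 2^{h-1}) - 2^{h-1} + 2 = c - 2^{h} + 2$ coincides with the current subtree's leftmost leaf, so the inductive hypothesis returns the desired sum verbatim. If $i > c - 2^{h-1}$, the recursion enters the right subtree rooted at $c - 1$ of height $h - 1$, whose leftmost leaf is $c - 2^{h-1} + 1$; the inductive hypothesis supplies $\sum_{j = c - 2^{h-1} + 1}^{i} I[j]$, and Lemma~\ref{lem:arr-val} gives $A[c - 2^{h-1}] = \sum_{j = c - 2^{h} + 2}^{c - 2^{h-1}} I[j]$, so prepending the latter telescopes the two adjacent index ranges into $\sum_{j = c - 2^{h} + 2}^{i} I[j]$.

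Order preservation rides along for free: in every $A[c - 2^{h-1}] + \text{val}$ step the left summand indexes strictly smaller positions than the right, matching the natural left-to-right grouping of $\sum_j I[j]$, and Lemma~\ref{lem:arr-val} already guarantees internal order within each $A[\cdot]$. The main obstacle I anticipate is handling phantom nodes, since for $N$ not a power of two the root and many ancestors satisfy $c > N$. I would resolve this by observing that $A[c - 2^{h-1}]$ is read only on the $i > c - 2^{h-1}$ branch, which combined with $i \le N$ forces $c - 2^{h-1} < N$, so the accessed slot is a stored (non-phantom) value; and that the $\texttt{curr} = i$ termination cannot trigger at a phantom node because $i \le N < c$ whenever $c$ is phantom. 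Hence the induction goes through uniformly whether or not $c$ itself lies in the stored range.
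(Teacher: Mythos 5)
Your proof is correct, but it takes a genuinely different route from the paper's. The paper unrolls the entire root-to-$i$ traversal into a bit string $T$, derives a closed-form expression $\mathcal{I}(a)$ for the node index at each step, and then shows---via explicit index arithmetic carried out in three appendices---that the operand $A[\mathcal{S}(\mathcal{I}(r_m))]$ contributed at each rightward turn covers the next contiguous block of $I$, so the blocks telescope from $1$ to $i$. You instead argue by induction on subtree height with a strengthened invariant about what $\texttt{acc\_rec}(i,c,2^{h-1})$ returns on the subtree rooted at $c$, splitting on the three code branches; the first, middle, and final operands of the paper's argument are all absorbed into the single ``right-branch'' inductive case, where Lemma~\ref{lem:arr-val} applied to the left child plus the inductive hypothesis on the right child concatenates two adjacent index ranges. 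Both arguments rest on Lemma~\ref{lem:arr-val} and on the same underlying telescoping of adjacent ranges (so neither needs commutativity), but your recursion-invariant formulation is shorter, matches the code structure directly, and localizes the phantom-node reasoning to the two observations you give (the left-child slot is read only when $i > c - 2^{h-1}$, hence is stored, and termination never occurs at a phantom index); the paper's path-unrolling buys an explicit description of exactly which array cells are accumulated along the traversal, at the cost of the heavier index manipulations in Appendices~\ref{sec:exp-acc-first}--\ref{sec:exp-acc-final}. One small refinement worth making explicit: state the invariant for targets $i \le \min(c, N)$ (as your phantom discussion implicitly does), and note that Lemma~\ref{lem:arr-val} applies to every stored node because all descendants of a stored node have strictly smaller indices and are therefore stored as well.
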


\begin{proof}
We can write the traversal of the path to a desired node $n$ with an array, where a 1 indicates a traversal into the left subtree, and a 0 into the right subtree. The index $a$ of the array indicates the node at height $h_r - a$, $h_r$ again being the height of the root node.
For example, a traversal in Fig.~\ref{fig:sw}(a) to index 5 would be $[1, 0, 0]$.
So, for some given array $T = [t_1, t_2, \dots, t_k]$, where $t_a$ indicates the $a$th element of $T$ and $k$ being the final index of $T$, we can say that the index $i$ of a node at traversal step $a$ with root index $2^{h_r} - 1$ is 
\begin{equation}
    \mathcal{I}(a) = 2^{h_r} - 1 - \sum_{j = 1}^{a} 2^{t_j(h_r - j)}
\end{equation}
due to Eqs.~\eqref{eq:left-child} and~\eqref{eq:right-child}.

The $\texttt{accumulate}(i, \delta)$ of the Southwest Tree adds nothing when traversing into the left subtree, while traversals into the right subtree add the value of the root of the left subtree.
Therefore, let us consider subarrays of $T$ consisting of consecutive $1$s followed by a single $0$, which dictate each operand from a rightward traversal.
Let the traversal step of traversals into right subtrees be $R = [r_1, r_2, \dots, r_\ell] = [a : 1 \leq a \leq k \wedge t_a = 0]$, where $r_b$ indicates the $b$th element of $R$ and $\ell$ is the final index of $R$.

Let $\mathcal{P}(i)$ be the parent of the node at index $i$, with $\mathcal{P}_r(i)$ being the parent of a right child, i.e., the left child of its parent.
By Eq.~\eqref{eq:right-child}, given index $i$ of a right child,
\begin{equation}\label{eq:parent-right}
    \begin{aligned}
        \mathcal{P}_r(i) &= i + 1.
    \end{aligned}
\end{equation}
Let $\mathcal{S}(i)$ be the index of the sibling of a right child at index $i$.
By Eqs.~\eqref{eq:left-child} and~\eqref{eq:parent-right},
\begin{equation}
    \begin{aligned}
        \mathcal{S}(i) &= i + 1 - 2^{h_{i + 1} - 1}. \\
    \end{aligned}
\end{equation}

The first $t_a = 0$ will be preceded by a sequence of zero or more $1$s, traversing left from the root node.
Hence, the first operand, detailed in Appendix~\ref{sec:exp-acc-first}, will be 
\begin{equation}\label{eq:acc-first}
    \begin{aligned}
        A[\mathcal{S}(\mathcal{I}(r_1))] &= \sum_{j = 1}^{\mathcal{S}(\mathcal{I}(r_1))} I[j].\\
    \end{aligned}
\end{equation}
The $m$th $t_a = 0$ will traverse left from the node at index $\mathcal{I}(r_{m - 1})$.
Hence, the $m$th operand, detailed in Appendix~\ref{sec:exp-acc-mth}, will be 
\begin{equation}\label{eq:acc-mth}
    \begin{aligned}
        A[\mathcal{S}(\mathcal{I}(r_m))] &= \sum_{j = \mathcal{S}(\mathcal{I}(r_{m-1})) + 1}^{\mathcal{S}(\mathcal{I}(r_m))} I[j].\\
    \end{aligned}
\end{equation}
Upon reaching the desired node $n$ at index $i$ after traversal direction $t_k$, detailed in Appendix~\ref{sec:exp-acc-final}, the final operand will be
\begin{equation}\label{eq:acc-final}
    A[i] = \sum_{j = \mathcal{S}(\mathcal{I}(r_\ell)) + 1}^{i} I[j].\\
\end{equation}
Therefore, the final accumulation will be 
\begin{equation}
    \begin{aligned}
        \texttt{accumulate}(i) &= \left(\sum_{j = 1}^{\mathcal{S}(\mathcal{I}(r_1))} I[j]\right) \\
        &\quad + \left(\sum_{m = 2}^{\ell} \sum_{j = \mathcal{S}(\mathcal{I}(r_{m - 1})) + 1}^{\mathcal{S}(\mathcal{I}(r_m))} I[j]\right)\\
        &\quad + \sum_{j = \mathcal{S}(\mathcal{I}(r_\ell)) + 1}^{i} I[j] \\
        &= \sum_{j = 1}^i I[j].
    \end{aligned}
\end{equation}

\end{proof}

\subsection{Update}\label{sec:update}

The $\texttt{update}(i, d)$ operation, shown in Listing~\ref{lst:update}, is performed by traversing the tree beginning at the root node.
On the way down, the left and right children are inverted so that the initial value at that index is retrieved.
When $i$ is reached, modify the value at that index.
On the way back up the traversal, set the value of each parent node in the same way as it was when initially built.

\noindent
\begin{minipage}{\linewidth}
\begin{lstlisting}[language=Python, label=lst:update, caption=A Python example of the update operation. The function $inv(x)$ here returns the inverse of $x$ with respect to the accumulation operation.]
def update(i, d):
    update_rec(i, d, root_index, root_left)

def update_rec(i, d, curr, left):
    if curr == i:
        A[i] += d
        return
    invert_children(curr, left)
    update_rec(i, d, get_child(i, curr, left), left >> 1)
    combine_children(curr, left)

def invert_children(i, left):
    if i < num_elements:
        A[i] = inv(A[i - 1]) + inv(A[i - left]) + A[i]
\end{lstlisting}
\end{minipage}

\begin{theorem}[Update]
Given an initial array of values, $I$, modified to an array $A$ according to Sec.~\ref{subsec:building}, the function $\texttt{update}(i, \delta)$ described in Sec.~\ref{sec:update} correctly sets $I[i] = I[i] + \delta$ while maintaining the order of operands.
\end{theorem}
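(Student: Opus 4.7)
The plan is to induct on the recursion depth of \texttt{update\_rec} and show that, at every visited index, the invariant of Lemma~\ref{lem:arr-val} is re-established with respect to the post-update array while the operand order is preserved. The structural observation driving the argument is that \texttt{invert\_children} is the algebraic left-inverse of \texttt{combine\_children}: from Eq.~\eqref{eq:node-values}, $A[curr] = A[curr - 2^{h_{curr}-1}] + A[curr - 1] + I[curr]$, so prepending $\mathrm{inv}(A[curr - 1]) + \mathrm{inv}(A[curr - 2^{h_{curr}-1}])$ on the left cancels the two subtree summands in sequence and isolates $I[curr]$, which is exactly what \texttt{invert\_children} writes into $A[curr]$. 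Thus, on the downward pass, each visited non-target node is temporarily reset to the current value of $I[curr]$, while both child subtrees remain untouched and continue to satisfy Lemma~\ref{lem:arr-val} for the pre-update array.

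The base case occurs when the recursion reaches $curr = i$. By Lemma~\ref{lem:arr-val}, $I[i]$ appears as the rightmost operand of $A[i]$, so $A[i] = \bigl(\sum_{j = i - 2^{h_i} + 2}^{i-1} I[j]\bigr) + I[i]$. Associativity of $+$ then lets $\delta$ slip inside the rightmost parenthesization, giving $A[i] + \delta = \bigl(\sum_{j = i - 2^{h_i} + 2}^{i-1} I[j]\bigr) + (I[i] + \delta)$, which is exactly the Lemma~\ref{lem:arr-val} value corresponding to the updated array. For the inductive step, the traversal rule encoded in \texttt{get\_child} guarantees that the target $i$ lies inside the chosen subtree, so by the inductive hypothesis that subtree becomes consistent with the updated $I$ after the recursive call, while the sibling subtree is untouched and remains consistent with $I$ (which agrees with the updated $I$ outside index $i$). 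Because $A[curr]$ still holds $I[curr]$ from the downward pass, the subsequent call to \texttt{combine\_children} reassembles $A[\text{left}] + A[\text{right}] + I[curr]$ in the original left-right-self order, restoring Lemma~\ref{lem:arr-val} at $curr$.

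The main obstacle I expect is the bookkeeping around non-commutativity: one must verify both that the inverse pattern in \texttt{invert\_children} reverses the composition order of \texttt{combine\_children} and that the base-case associativity move really does absorb $\delta$ into the rightmost slot of the stored sum. A secondary, more mechanical complication is the phantom regime $curr \geq N$, where both helper routines no-op via their \texttt{if i < num\_elements} guards; this case is dispatched by observing that the recursion still threads through phantom indices correctly, that Lemma~\ref{lem:arr-val} imposes no constraint there, and that the target $i \in [1, N]$ is always eventually encountered in an in-bounds region where the base case and subsequent recombines apply.
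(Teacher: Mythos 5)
Your proposal is correct and takes essentially the same route as the paper: invert the children's contributions on the way down to isolate $I[curr]$, absorb $\delta$ at the target via associativity, and recombine according to Eq.~\eqref{eq:node-values} on the way up. You merely organize it as an explicit induction with the Lemma~\ref{lem:arr-val} invariant and spell out the sibling-subtree and phantom-node cases (and your base case, with no inversion at the target, actually tracks Listing~\ref{lst:update} more faithfully than the paper's prose), but the underlying argument is the same.
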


\begin{proof}
    Say that $inv(A[c]) + A[c]$ for any index $c$ yields the identity of the given operator $+$.
    Each step of the traversal begins by inverting the values of the root node's children, leaving only $I[curr]$. 
    In order to maintain the order of non-commutative operands, we add the inverse of the children in reverse order from Eq.~\eqref{eq:node-values}, i.e., we set $A[c] = inv(A[c - 1]) + inv(A[c - 2^{h_c - 1}]) + A[c]$.
    \begin{equation}
        \begin{aligned}
            A[c] &= inv(A[c - 1]) + inv(A[c - 2^{h_c - 1}]) + A[c - 2^{h_c - 1}] + A[c - 1] + I[c] \\
            &= inv(A[c - 1]) + A[c - 1] + I[c] \\
            &= I[c].
        \end{aligned}
    \end{equation}

    When the desired index $i$ is reached, after $A[i]$ is inverted back to $I[i]$, we set $I[i] = I[i] + \delta$.
    Finally, we traverse back up the tree, setting the node values according to Eq.~\eqref{eq:node-values}.
    \begin{equation}
        \begin{aligned}
            A[i] &= A[i - 2^{h_i - 1}] + A[i - 1] + I[i] + \delta \\
            A[\mathcal{P}(i)] &= A[\mathcal{P}(i) - 2^{h_{\mathcal{P}(i)} - 1}] + A[\mathcal{P}(i) - 1] + I[\mathcal{P}(i)] \\
            A[\mathcal{P}(\mathcal{P}(i))] &= A[\mathcal{P}(\mathcal{P}(i)) - 2^{h_{\mathcal{P}(\mathcal{P}(i))} - 1}] + A[\mathcal{P}(\mathcal{P}(i)) - 1] + I[\mathcal{P}(\mathcal{P}(i))] \\
            &= \dots
        \end{aligned}
    \end{equation}
    In this manner, all order of operands will be maintained and $I[i]$ modified as intended, so that $\texttt{accumulate}(i)$ will continue to function as intended.
\end{proof}

\section{Complexity Comparison}

This section describes the time and space complexity for the Southwest Tree and highlights how it outperforms the Segment Tree, which is generally considered the leading solution for non-commutative partial accumulations.
The Southwest Tree performs with the same time complexity while reducing the space requirements.

\subsection{Time}

\subsubsection{Update and Accumulate}
These operations, in the worst case, traverse the entire tree and use constant time for each step.
Therefore, the time complexity for these operations will be dependent on the height of the tree, which, as described in Sec.~\ref{subsec:building}, is $\lfloor \log_2 (N) \rfloor + 1$ for the Southwest Tree, giving it $O(\log_2 N)$ time complexity. 
This matches the time complexity of these operations on the leading alternative of the Segment Tree.

\subsubsection{Build} \label{subsubsec:build-complexity}
To build the tree, one simply needs to traverse each node twice, and the tree will have $2^{\lfloor \log_2 N \rfloor + 1} - 1$ nodes. 
The best case is when $N = 2^x - 1$ for any integer $x \in \mathbb{N}$, where $\mathbb{N}$ is the set of natural numbers.
In this case, the tree will have $2^{\lfloor \log_2 (2^x - 1) \rfloor + 1} - 1 = 2^{x - 1 + 1} - 1 = 2^x - 1 = N$, having zero phantom nodes.
In the worst case, $N = 2^x$ for any integer $x \in \mathbb{N}$.
In this case, the tree will have $2^{\lfloor \log_2 (2^x) \rfloor + 1} - 1 = 2^{x + 1} - 1 = 2N - 1$, having $2N - 1 - N = N - 1$ phantom nodes. 

Each node only implements constant time operations to set its values, so in either case the build time for the Southwest Tree is $O(N)$, matching the build time complexity of the Segment Tree.

\subsection{Space}
Since both the Southwest Tree and Segment Tree recursively traverse the tree to the bottom in the worst case, they equally require $O(\log_2 N)$ stack space during operations.
However, although both of these trees have $O(N)$ space complexity to store the values, the Southwest Tree modifies the input array in-place and does not require any additional space for values.
This beats the Segment Tree, which needs $O(N)$ additional space to store values on top of the initial array.

\section{Conclusion}

This paper proposes a tree structure, the Southwest Tree, of $N$ nodes that can be built in $O(N)$ time and handle \texttt{update} and \texttt{accumulate} functions on an input array with $O(\log_2 N)$ time complexity. 
It does this by utilizing the principle that nodes to the left of or below some given node have lower indices than the given node. 
It also uses phantom nodes for traversal, allowing it to conserve space but maintain the functionality of a full tree. 

Currently, the most popular solutions to handle these operations are the Segment Tree and the Binary Indexed Tree. 
However, since the Binary Indexed Tree cannot handle non-commutative accumulation operations, the Segment Tree is usually used instead, requiring additional memory.
The Southwest Tree therefore improves on both of these solutions by accurately handling non-commutative, invertible accumulation operations while only using $N$ array elements for storage.

%%
%% Bibliography
%%

%% Please use bibtex, 

\appendix

\section{Expanded Derivation of Eq.~\ref{eq:acc-first}}\label{sec:exp-acc-first}
The first $t_a = 0$ will be preceded by a sequence of zero or more $1$s, traversing left from the root node.
Hence, the first operand, by Lemma~\ref{lem:arr-val}, will be 
\begin{equation}
    \begin{aligned}
        A[\mathcal{S}(\mathcal{I}(r_1))] &= \sum_{j = \mathcal{S}(\mathcal{I}(r_1)) - 2^{h_r - r_1} + 2}^{\mathcal{S}(\mathcal{I}(r_1))} I[j]\\
        &= \sum_{j = \mathcal{I}(r_1) + 1 - 2^{h_{\mathcal{I}(r_1) + 1} - 1} - 2^{h_r - r_1} + 2}^{\mathcal{S}(\mathcal{I}(r_1))} I[j]\\
        &= \sum_{j = \mathcal{I}(r_1) + 1 - 2^{h_{\mathcal{P}_r(\mathcal{I}(r_1))} - 1} - 2^{h_r - r_1} + 2}^{\mathcal{S}(\mathcal{I}(r_1))} I[j]\\
        &= \sum_{j = \mathcal{I}(r_1) + 1 - 2^{h_r - r_1 + 1 - 1} - 2^{h_r - r_1} + 2}^{\mathcal{S}(\mathcal{I}(r_1))} I[j]\\
        &= \sum_{j = \mathcal{I}(r_1) + 1 - 2^{h_r - r_1} - 2^{h_r - r_1} + 2}^{\mathcal{S}(\mathcal{I}(r_1))} I[j]\\
        &= \sum_{j = \mathcal{I}(r_1) + 1 - 2^{h_r - r_1 + 1} + 2}^{\mathcal{S}(\mathcal{I}(r_1))} I[j]\\
        &= \sum_{j = 2^{h_r} - 1 - \left(\sum_{j = 1}^{r_1} 2^{t_j(h_r - j)}\right) + 1 - 2^{h_r - r_1 + 1} + 2}^{\mathcal{S}(\mathcal{I}(r_1))} I[j]\\
        &= \sum_{j = 2^{h_r} - 1 - \left(\sum_{j = 1}^{r_1 - 1} 2^{h_r - j}\right) - 2^{h_r - r_1 + 1} + 2}^{\mathcal{S}(\mathcal{I}(r_1))} I[j]\\
        &= \sum_{j = 2^{h_r} - 1 - 2^{h_r - r_1 + 1 + (r_1 - 1 - 1 + 1)} + 2}^{\mathcal{S}(\mathcal{I}(r_1))} I[j]\\
        &= \sum_{j = 2^{h_r} - 1 - 2^{h_r} + 2}^{\mathcal{S}(\mathcal{I}(r_1))} I[j]\\
        &= \sum_{j = 1}^{\mathcal{S}(\mathcal{I}(r_1))} I[j]\\
    \end{aligned}
\end{equation}
where we have used 
\begin{equation}
    \begin{aligned}
        \left(\sum_{j = 1}^{r_1} 2^{t_j(h_r - j)}\right) + 1 &= \left(\sum_{j = 1}^{r_1 - 1} 2^{1(h_r - j)}\right) - 2^{0(h_r - j)} + 1 \\
        &= \left(\sum_{j = 1}^{r_1 - 1} 2^{h_r - j}\right) - 1 + 1 \\
        &= \sum_{j = 1}^{r_1 - 1} 2^{h_r - j},
    \end{aligned}
\end{equation}
because $t_j = 1 \quad \forall 0 \leq j \leq h_r - r_1 - 1$ and $t_{r_1} = 0$, and 
\begin{equation}
    \begin{aligned}
        \left(\sum_{j = 1}^{r_1 - 1} 2^{h_r - j}\right) + 2^{h_r - r_1 + 1} &= 2^{h_r - r_1 + 1 + (r_1 - 1 - 1 + 1)}
    \end{aligned}
\end{equation}
because the summation $\sum_{j = 1}^{r_1 - 1} 2^{h_r - j}$ adds the number of operands of the summation, $(r_1 - 1) - (1) + 1$, to the exponent of $2^{h_r - r_1 + 1}$, as one can see by combining the operands one by one as follows: 
\begin{equation}
    \begin{aligned}
        \left(\sum_{j = 1}^{r_1 - 1} 2^{h_r - j}\right) + 2^{h_r - r_1 + 1} &= \left(\sum_{j = 1}^{r_1 - 2} 2^{h_r - j}\right) + 2^{h_r - r_1 + 2}\\
        &= \left(\sum_{j = 1}^{r_1 - 3} 2^{h_r - j}\right) + 2^{h_r - r_1 + 3}\\
        &= \dots \\
    \end{aligned}
\end{equation}

\section{Expanded Derivation of Eq.~\ref{eq:acc-mth}}\label{sec:exp-acc-mth}
The $m$th $t_a = 0$ will traverse left from the node at index $\mathcal{I}(r_{m - 1})$.
Hence, the $m$th operand, following similar logic to Sec.~\ref{sec:exp-acc-first}, will be 
\begin{equation}
    \begin{aligned}
        A[\mathcal{S}(\mathcal{I}(r_m))] &= \sum_{j = \mathcal{S}(\mathcal{I}(r_m)) - 2^{h_r - r_m} + 2}^{\mathcal{S}(\mathcal{I}(r_m))} I[j]\\
        &= \sum_{j = \mathcal{I}(r_m) + 1 - 2^{h_{\mathcal{I}(r_m) + 1} - 1} - 2^{h_r - r_m} + 2}^{\mathcal{S}(\mathcal{I}(r_m))} I[j]\\
        &= \sum_{j = \mathcal{I}(r_m) + 1 - 2^{h_{\mathcal{P}_r(\mathcal{I}(r_m))} - 1} - 2^{h_r - r_m} + 2}^{\mathcal{S}(\mathcal{I}(r_m))} I[j]\\
        &= \sum_{j = \mathcal{I}(r_m) + 1 - 2^{h_r - r_m + 1 - 1} - 2^{h_r - r_m} + 2}^{\mathcal{S}(\mathcal{I}(r_m))} I[j]\\
        &= \sum_{j = \mathcal{I}(r_m) + 1 - 2^{h_r - r_m} - 2^{h_r - r_m} + 2}^{\mathcal{S}(\mathcal{I}(r_m))} I[j]\\
        &= \sum_{j = \mathcal{I}(r_m) + 1 - 2^{h_r - r_m + 1} + 2}^{\mathcal{S}(\mathcal{I}(r_m))} I[j]\\
        &= \sum_{j = 2^{h_r} - 1 - \left(\sum_{j = 1}^{r_m} 2^{t_j(h_r - j)}\right) + 1 - 2^{h_r - r_m + 1} + 2}^{\mathcal{S}(\mathcal{I}(r_m))} I[j]\\
        &= \sum_{j = 2^{h_r} - 1 - \left(\sum_{j = 1}^{r_{m-1}} 2^{t_j(h_r - j)}\right) - \left(\sum_{j = r_{m-1} + 1}^{r_m} 2^{t_j(h_r - j)}\right) + 1 - 2^{h_r - r_m + 1} + 2}^{\mathcal{S}(\mathcal{I}(r_m))} I[j]\\
        &= \sum_{j = \mathcal{I}(r_{m-1}) - \left(\sum_{j = r_{m-1} + 1}^{r_m} 2^{t_j(h_r - j)}\right) + 1 - 2^{h_r - r_m + 1} + 2}^{\mathcal{S}(\mathcal{I}(r_m))} I[j]\\
        &= \sum_{j = \mathcal{I}(r_{m-1}) - \left(\sum_{j = r_{m-1} + 1}^{r_m - 1} 2^{h_r - j}\right) - 2^{h_r - r_m + 1} + 2}^{\mathcal{S}(\mathcal{I}(r_m))} I[j] \\
        &= \sum_{j = \mathcal{I}(r_{m-1}) - 2^{h_r - r_m + 1 + (r_m - 1 - (r_{m-1} + 1) + 1)} + 2}^{\mathcal{S}(\mathcal{I}(r_m))} I[j]\\
        &= \sum_{j = \mathcal{I}(r_{m-1}) - 2^{h_r - r_{m-1}} + 2}^{\mathcal{S}(\mathcal{I}(r_m))} I[j]\\
        &= \sum_{j = \mathcal{S}(\mathcal{I}(r_{m-1})) + 1}^{\mathcal{S}(\mathcal{I}(r_m))} I[j].\\
    \end{aligned}
\end{equation}

\section{Expanded Derivation of Eq.~\ref{eq:acc-final}}\label{sec:exp-acc-final}
Upon reaching the desired node $n$ at index $i$ after traversal direction $t_k$, by Lemma~\ref{lem:arr-val},
\begin{equation}
    \begin{aligned}
        A[i] &= \sum_{j = i - 2^{h_i} + 2}^{i} I[j]\\
        &= \sum_{j = \mathcal{I}(k) - 2^{h_r - k} + 2}^{i} I[j].\\
    \end{aligned}
\end{equation}
If $t_k = 0$, then $r_\ell = k$ and $\mathcal{I}(k) = \mathcal{S}(\mathcal{I}(r_\ell)) + 2^{h_r - r_\ell} - 1$ and
\begin{equation}
    \begin{aligned}
        A[i] &= \sum_{j = \mathcal{I}(k) - 2^{h_r - k} + 2}^{i} I[j]\\
        &= \sum_{j = \mathcal{S}(\mathcal{I}(r_\ell)) + 2^{h_r - r_\ell} - 1 - 2^{h_r - k} + 2}^{i} I[j]\\
        &= \sum_{j = \mathcal{S}(\mathcal{I}(r_\ell)) + 1}^{i} I[j].\\
    \end{aligned}
\end{equation}
If $t_k = 1$, then $\mathcal{I}(k) = \mathcal{I}(r_\ell) - \sum_{j = r_\ell + 1}^{k} 2^{h_r - j}$ and
\begin{equation}
    \begin{aligned}
        A[i] &= \sum_{j = \mathcal{I}(k) - 2^{h_r - k} + 2}^{i} I[j]\\
        &= \sum_{j = \mathcal{I}(r_\ell) - \left(\sum_{j = r_\ell + 1}^{k} 2^{h_r - j}\right) - 2^{h_r - k} + 2}^{i} I[j]\\
        &= \sum_{j = \mathcal{I}(r_\ell) - 2^{h_r - k + (k - (r_\ell + 1) + 1)} + 2}^{i} I[j]\\
        &= \sum_{j = \mathcal{I}(r_\ell) - 2^{h_r - r_\ell} + 2}^{i} I[j]\\
        &= \sum_{j = \mathcal{S}(\mathcal{I}(r_\ell)) + 1}^{i} I[j].\\
    \end{aligned}
\end{equation}
Therefore, both cases end with the final equation
\begin{equation}
    A[i] =\sum_{j = \mathcal{S}(\mathcal{I}(r_\ell)) + 1}^{i} I[j].\\
\end{equation}

\end{document}